\def\BibTeX{{\rm B\kern-.05em{\sc i\kern-.025em b}\kern-.08em
    T\kern-.1667em\lower.7ex\hbox{E}\kern-.125emX}}
\newcommand\fs@betterruled{%
  \def\@fs@cfont{\bfseries}\let\@fs@capt\floatc@ruled
  \def\@fs@pre{\vspace*{7pt}\hrule height.8pt depth0pt \kern2pt}%
  \def\@fs@post{\kern2pt\hrule\relax}%
  \def\@fs@mid{\kern2pt\hrule\kern2pt}%
  \let\@fs@iftopcapt\iftrue}
\newtheorem{theorem}{Theorem}[section]
\newtheorem{lemma}[theorem]{Lemma}
\newcommand{\paren}[1]{\left( #1 \right)}
\newcommand{\cbrace}[1]{\left\{#1\right\}}
\newcommand{\sbrace}[1]{\left[#1\right]}
\newcommand{\cC}{{\cal C}}
\newcommand{\cS}{{\cal S}}
\newcommand{\vc}{\vec{c}}
\newcommand{\ve}{\vec{e}}
\newcommand{\vi}{\vec{i}}
\newcommand{\vs}{\vec{s}}
\newcommand{\vu}{\vec{u}}
\newcommand{\vv}{\vec{v}}
\newcommand{\vw}{\vec{w}}
\newcommand{\vx}{\vec{x}}
\newcommand{\vy}{\vec{y}}
\newcommand{\vz}{\vec{z}}
\newcommand{\vchat}{\vc_{*}}
\newcommand{\vzero}{\vec{0}}
\newcommand{\encode}{\alpha}
\newcommand{\gf}{\mathbb{F}}
\newcommand{\gfq}{\gf_{q}}
\newcommand{\gftwo}{\gf_{2}}
\newcommand{\complex}{\mathbb{C}}
\newcommand{\ym}{{{\demod{y}}}}
\newcommand{\vym}{\demod{{\vy}}}
\newcommand{\demodsym}{\theta}
\newcommand{\demod}[1]{\demodsym\paren{#1}}
\newcommand{\govern}[1]{\gamma\paren{{#1}}}
\newcommand{\bigo}{\mathcal{O}}
\newcommand{\asol}[1]{{\color{blue}{#1}}}
\newcommand{\ebno}{E_{b}/N_{0}}
\newcommand{\db}{\sbrace{dB}}
\newcommand{\crclen}{\text{crc}_\text{len}}
\newcommand{\SNR}{\text{SNR}}
\newcommand{\jstar}{j_{*}}
\newcommand{\predsym}{\pi}
\newcommand{\pred}[1]{\predsym\paren{#1}}
\newcommand{\pve}{q\paren{\ve}}
\newcommand{\pvx}{q\paren{\vx}}
\newcommand{\ppredve}{q\paren{\pred{\ve}}}
\newcommand{\prev}[1]{\phi\paren{#1}}
\begin{document}

\title{Soft Maximum Likelihood Decoding using GRAND}
%
%
% author names and IEEE memberships
% note positions of commas and nonbreaking spaces ( ~ ) LaTeX will not break
% a structure at a ~ so this keeps an author's name from being broken across
% two lines.
% use \thanks{} to gain access to the first footnote area
% a separate \thanks must be used for each paragraph as LaTeX2e's \thanks
% was not built to handle multiple paragraphs
%

\begin{comment}
%\IEEEmembership{Member,~IEEE,}
\author{Amit~Solomon,
        Ken~R.~Duffy,
        Muriel M\'edard% <-this % stops a space
%\thanks{M. Shell was with the Department
%of Electrical and Computer Engineering, Georgia Institute of Technology, Atlanta,
%GA, 30332 USA e-mail: (see http://www.michaelshell.org/contact.html).}% <-this % stops a space
%\thanks{J. Doe and J. Doe are with Anonymous University.}% <-this % stops a space
%\thanks{Manuscript received April 19, 2005; revised August 26, 2015.}}
}
\end{comment}

\author{\IEEEauthorblockN{Amit Solomon}
\IEEEauthorblockA{\textit{RLE, MIT} \\
%\textit{Massachusetts Institute of Technology}\\
Cambridge, MA 02139, USA \\
amitsol@mit.edu}
\and
\IEEEauthorblockN{Ken R. Duffy}
\IEEEauthorblockA{\textit{Hamilton Institute} \\
{Maynooth University, Ireland}\\
ken.duffy@mu.ie}
\and
\IEEEauthorblockN{Muriel M\'edard}
\IEEEauthorblockA{\textit{RLE, MIT} \\
%\textit{Massachusetts Institute of Technology}\\
Cambridge, MA 02139, USA \\
medard@mit.edu}
}

% The paper headers
\markboth{Journal of \LaTeX\ Class Files,~Vol.~14, No.~8, August~2015}%
{Shell \MakeLowercase{\textit{et al.}}: Bare Demo of IEEEtran.cls for IEEE Journals}

% make the title area
\maketitle

% As a general rule, do not put math, special symbols or citations
% in the abstract or keywords.
\begin{abstract}
Maximum Likelihood (ML) decoding of forward error correction codes is known to be optimally accurate, but is not used in practice as it proves too challenging to efficiently implement. Here we introduce a ML decoder called SGRAND, which is a development of a previously described hard detection ML decoder called GRAND, that fully avails of soft detection information and is suitable for use with any arbitrary high-rate, short-length block code. We assess SGRAND's performance on CRC-aided Polar (CA-Polar) codes, which will be used for all control channel communication in 5G NR, comparing its accuracy with CRC-Aided Successive Cancellation List decoding (CA-SCL), a state-of-the-art soft-information decoder specific to CA-Polar codes.
\end{abstract}

% Note that keywords are not normally used for peerreview papers.
\begin{IEEEkeywords}
ML decoding, GRAND, 5G NR, CA-Polar
\end{IEEEkeywords}
\IEEEpeerreviewmaketitle

\section{Introduction}
Since the work of Shannon~\cite{shannon1948mathematical}, Maximum likelihood (ML) decoders have been sought. As it was established in the 1970s that ML decoding of arbitrary linear codes is an NP-complete problem~\cite{berlekamp1978inherent}, instead of seeking a universal, code book independent decoder, most codes are co-designed and developed with a specific decoder that is often an approximation of a ML decoder~\cite{roth2006introduction,lin2001error}. 

An exception to this paradigm is the recently introduced Guessing Random Additive Noise Decoding (GRAND)~\cite{duffy2018guessing}. In the hard detection setting with additive noise, GRAND has been formally proven to be a ML decoder that works with any block code construction. Its novelty stems from the fact that it attempts to identify the noise that corrupted the code word, rather than directly determining the transmitted code word itself. A computationally more efficient variant of GRAND is GRAND with ABandonment (GRANDAB), which either finds a ML code word or reports a failure after a pre-determined computational cut-off. A failure report from  GRANDAB is equivalent to reporting a channel erasure, which is preferable as erasures require less overhead when corrected by an outer code. Both GRAND and GRANDAB have been mathematically established to be capacity achieving when used with random codes~\cite{duffy2019capacity}.

Incorporating soft information into GRAND and GRANDAB decoding would improve accuracy, but it is unclear how to fully do so. An initial attempt that uses one bit of quantized soft information per channel use, which is similar in spirit to how soft information is generated for use with Chase decoding \cite{Cha72}, has been proposed~\cite{duffy2019guessing}. That results in improved accuracy and reduced decoding complexity, but falls short on fully making use of soft information.

In this paper we introduce an advanced variant of GRAND called Soft GRAND (SGRAND) that fully utilizes real-valued channel outputs. We prove that SGRAND is a ML decoder for arbitrary additive memoryless channels, and benchmark its performance by comparison with the state-of-the-art soft detection decoder of 5G NR CA-Polar codes, CRC-Aided Successive Cancellation List decoding (CA-SCL)~\cite{tal2011list,tal2015list}, as implemented in the Matlab 5G toolbox. Fig.~\ref{fig:bler} provides a representative performance evaluation for a [128,105] CA-Polar code where it can be seen that SGRAND with ABandonment (SGRANDAB) obtains better Block Error Rate (BLER), which includes both erroneous decodings and erasures, than CA-SCL. The gap between the two curves at a BLER of $10^{-3}$ is $\approx 0.5$ dB.

The paper is structured as follows. In Section~\ref{sec:model} we discuss the communication model and review relevant background. In Section~\ref{sec:sgrand_m_1} we present SGRAND. In Section~\ref{sec:simulations} we explain how the simulation was conducted and show additional results. We conclude the paper and discuss future work in Section~\ref{sec:conclusion}.

\begin{figure}
    \centering
    \includegraphics[width=1 \columnwidth]{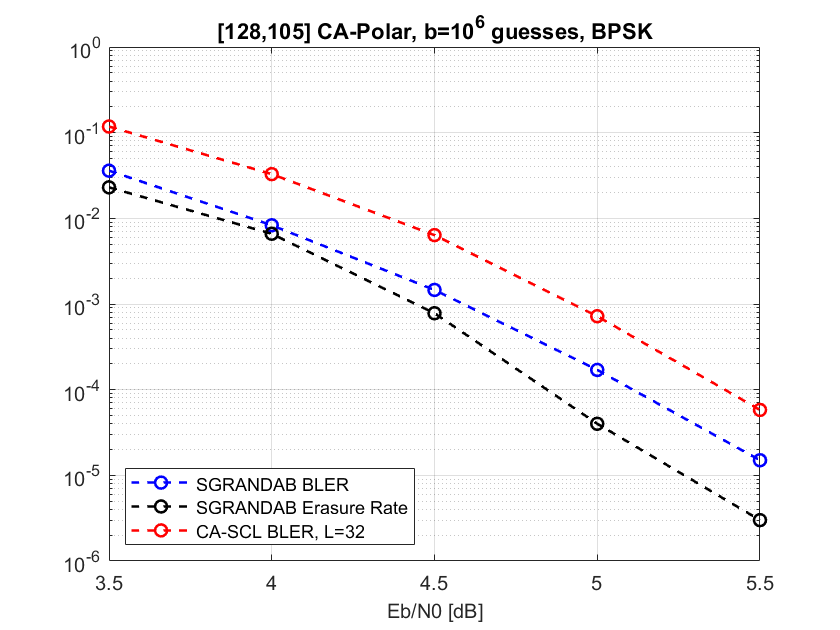}
    \caption{SGRANDAB and CA-SCL BLER comparison for a [128,105] CA-Polar code. CA-SCL takes list size, $L$, as an argument. While MATLAB defaults to $L=8$, better BLER performance is seen for higher values and so results for $L=32$ are presented. Also shown is SGRANDAB's errasure rate. }
    \label{fig:bler}
\end{figure}

\begin{comment}
\begin{figure}
    \centering
    \includegraphics[width=1 \columnwidth]{./fig/grand_modes.png}
    \caption{Different Guessing Random Additive Noise Decoder (GRAND) variations. In Soft GRAND (SGRAND) the channel output is used, as explained in Section~\ref{sec:model}. \asol{SGRAND as I propose uses the actual channel outputs directly. I believe it is important to keep it that way, so we can generalize easily to an arbitrary modulation.} In Symbol Reliability GRAND (SRGRAND), except for the demodulated symbols, the decoder also utilizes an additional bit of reliability for each channel output. In GRAND one demodulates the channel output, resulting in most likely demodulated channel output.}
    \label{fig:grand_var}
\end{figure}
\end{comment}

\begin{comment}
\begin{figure}
\centering
\subfloat[Block Error Rate]{\includegraphics[width=1 \columnwidth]{./fig/bler}}
\linebreak
\subfloat[Average number of queries]{\includegraphics[width=1 \columnwidth]{./fig/guesswork}%
\label{fig:guesswork}}
\caption[]{Block Error Rate (BLER) comparison of Soft Guessing Random Additive Noise Decoder (SGRAND) and Matlab's implementation of~\cite{tal2011list,tal2015list} in Additive White Gaussian Noise subjet to Binary Phase Shift Keying settings. Ab. rate is the abandonment rate. $b$ is the abandonment threshold, as defined in Algorithm~\ref{alg:sgrandab}. For further details, see Section~\ref{sec:simulations}.}
\label{fig:comparison}
\end{figure}
\end{comment}

\section{Model and Background}\label{sec:model}

\subsection{Definitions and Notation}
Let $x,\vec{x},X, \vec{X}$ denote a scalar, vector, matrix, and a random vector respectively. All vectors are row vectors, and $x_i$ denotes the $i$-th element of $\vx$. A linear block code is characterized by a code-length, $n$, and code-dimension, $k$, $\sbrace{n,k}$. $\gfq$ denotes a Galois field with $q$ elements. Composition of functions is denoted by $\circ$, i.e. $f\circ g\paren{x}=f\paren{g\paren{x}}$.

\subsection{System model}
We consider the standard binary communication setting where an information word $\vu\in\gftwo^k$ is encoded with an error correcting code $\encode:\gftwo^k\to\gftwo^n$ into a code word $\vc\in\gftwo^n$. The code book is the set of all code words $\cC=\cbrace{\vc:\vc=\encode\paren{\vu},\vu\in\gftwo^k}$. For communication on an analog medium, each set of $m$ bits is modulated into a channel input $\vx\in\complex^{n/m}$, using a modulation function $\gftwo^n\to\complex^{n/m}$, that is in turn sent over a noisy channel, where, for convenience, we assume that $m$ divides $n$. The channel output is denoted by $\vec{Y}=\vx+\vec{Z}\in\complex^{n/m}$, where $\vec{Z}$ is random additive noise. A soft decoder is a function $\complex^{n/m}\to\cC$ that outputs an estimate of $\vc$ from $\vy$. A soft detection ML decoder outputs $\underset{\vc\in\cC}{\arg\max}\;p\paren{\vy\mid\vc}$

Each channel output governs the a posteriori probabilities of $m$ demodulated bits, which, as a result of interleaving, are not necessarily sequential. As noise is assumed to impact each transmitted symbol independently, the a posteriori probability of each demodulated bit's value depends on a single channel output. We denote the index of the channel output that determines the a posteriori probability of the $i$-th bit by $\govern{i}$.

A demodulator is a function $\demodsym:\complex^{n/m}\to\gftwo^n$ that returns the most likely modulated version of the channel output, returning $\vym=\underset{\vw\in\gftwo^n}{\arg\max}\;p\paren{\vy\mid\vw}$. A hard decoder is a function $\gftwo^n\to\cC$ that receives $\vym\in\gftwo^n$ and returns a code word.

\subsection{GRAND}
GRAND is a hard decoder that identifies the most likely noise sequence, $\vz\in\gftwo^n$ as GRAND is a hard decoder, rather than the most likely code word $\vc$. It was shown in~\cite{duffy2018guessing,duffy2019capacity} that these two problems are equivalent for not-necessarily-memoryless discrete additive channels with uniformly likely input. Identification is achieved by inverting, in order from most likely to least likely based on a probabilistic noise model, the effect of each putative noise sequence from the hard demodulated sequence and querying whether what remains, $\vym-\vz$, is in the code book. GRAND only stops at the first instance where the response is affirmative, while GRANDAB additionally halts and reports an erasure if too many code book queries have been made. 

Linear codes are the most common forward error correction codes~\cite{roth2006introduction},\cite{lin2001error}.
%e.g. Reed-Solomon codes~\cite{reed1960polynomial}, BCH codes~\cite{hocquenghem1959codes},\cite{bose1960class}, LDPC codes~\cite{gallager1962low}, Reed-Muller codes~\cite{muller1954application},\cite{reed1953class}, and Polar codes~\cite{arikan2008channel,arikan2009channel}. 
Testing for code book membership in a linear code book is equivalent to determining if the syndrome of $\vym-\vz$ is the zero vector, that is if $H\paren{\vym-\vz}^T=\vzero^T$, where $H$ is a parity check matrix of the code~\cite{roth2006introduction},\cite{lin2001error}. Pseudocode for GRANDAB and linear codes is given in Algorithm~\ref{alg:GRAND}.

\begin{algorithm}
\caption{GRANDAB for a linear code}
\label{alg:GRAND}
\begin{flushleft}
        \textbf{Input:} $\vym,H,b$ \Comment{$b$ is max \#queries, GRAND: $b=\infty$}\newline
        \textbf{Output:} $\vchat$
\end{flushleft}
\begin{algorithmic}[1]
%\State Inputs: $\vym,H,b$ \Comment{$b$ is max \#guesses, GRAND: $b=\infty$}
%\State Outputs: $\vchat$
\State $g\gets0$ \Comment{$g$ counts queries performed}
\While{$g\leq b$}
\State $\vz\gets$ next most likely noise putative sequence\label{algline:next}
\State $g\gets g+1$
\If{$H\cdot\paren{\vym-\vz}^T=\vzero$} \Comment{$\gftwo$ operations}
\State $\vchat\gets\vym-\vz$ \Comment{ML code word found}
\State \Return
%\State \Return
\EndIf
\EndWhile
\State $\vchat\gets\perp$ \Comment{Code word not found in $b$ queries; erasure}
\State \Return 
\end{algorithmic}
\end{algorithm}

Note that Algorithm~\ref{alg:GRAND} does not specify how to perform the choice of the next putative noise sequence to be considered in line~\ref{algline:next}. For a Binary Symmetric Channel (BSC), that amounts to generating all sequences of Hamming weight 0, followed by all sequences of Hamming weight 1 and so on, which is a well-studied problem for which efficient recursive algorithms are known \cite{ruskey2009coolest,mazurkiewicz2017efficient}. Modelling a well-interleaved communication system, in this paper we establish how to perform GRAND with per-bit soft information for an arbitrary binary memoryless channel. 

\section{SGRAND}\label{sec:sgrand_m_1}
We capture the rank-ordering of demodulated symbols by their reliability via an Ordered Error Indices (OEI) vector $\vi=\paren{i_1,\ldots,i_{n}}$, which satisfies $p\paren{\vym_{i_j}\mid {y_{\govern{i_j}}}}\leq p\paren{\vym_{i_l}\mid {y_{\govern{i_l}}}}$ for $j<l$, where $\vym$ is the most likely demodulated version of $\vy$, and the OEI vector depends on $\vy$, the modulation and the channel statistics, but this dependency is suppressed for notational simplicity. We assume that a priori demodulated probabilities are equal, corresponding to uniform input. The OEI vector is almost surely unique for a continuous memoryless noise distribution.

Pseudocode for SGRANDAB is given in Algorithm~\ref{alg:sgrandab}, which is most readily understood by the example that follows. Suppose $n=3,m=1$, and the channel output realizations satisfy:
$p\paren{\vym_1\mid y_{\govern{1}}}=0.6$, $p\paren{\vym_2\mid y_{\govern{2}}}=0.8$,  $p\paren{\vym_3\mid y_{\govern{3}}}=0.9$, and the probability density function values are $p\paren{y_{\govern{1}}}=0.5$, $p\paren{y_{\govern{2}}}=1$, $p\paren{y_{\govern{3}}}=4$. The unique OEI vector is $\paren{1,2,3}$. For the sake of the example, we assume that $H\cdot\vchat^T=\vzero^T$ is satisfied only at the last query, identifying an element of the code book. The workings of Algorithm~\ref{alg:sgrandab} for this setting are given in Table~\ref{tbl:guess_example}. We have the following theorem on correctness and progress for SGRAND.

\begin{algorithm}
\caption{SGRANDAB for a linear code}
\label{alg:sgrandab}
\begin{flushleft}
        \textbf{Input:} $\vy,H,b$ \Comment{$b$ is max \#queries, SGRAND: $b=\infty$}\newline
        \textbf{Output:} $\vchat$
\end{flushleft}
\begin{algorithmic}[1]
\State $g\gets0$ \Comment{$g$ counts queries performed}
%\State $\vym\gets$ best modulated estimation of $\vy$
\State $\cS\gets\cbrace{\vzero}$ \Comment{$\cS$ contains candidate error vectors}
\State $\vi\gets$ OEI vector \Comment{Based on $\vy$}
\While{$g\leq b$}
%\State $\vz\gets$ next most likely noise sequence\label{algline:next}
\State $\ve\gets\underset{\vv\in\cS}{\arg\max}\;p\paren{\vy\mid \vym - \vv}$
\State $\cS = \cS\setminus\cbrace{\ve}$
\State $g\gets g+1$
\If{$H\cdot\paren{\vym-\ve}^T=\vzero$} \label{algline:synd_check}
\State $\vchat\gets\vym-\ve$ \Comment{ML code word found}
\State \Return
\Else \Comment{Update $\cS$ for the next query}
\If{$\ve=\vzero$}
\State $\jstar\gets 0$
\Else
\State $\jstar\gets\max\cbrace{j:{e_i}_j\neq 0}$ \Comment{$\jstar>0$}
\EndIf
\If{$\jstar<n$}
\State ${e_i}_{\paren{\jstar+1}}\gets 1$
\State $\cS=\cS\cup\cbrace{\ve}$
\If{$\jstar>0$}
\State $e_{i_{\jstar}}\gets 0$
\State $\cS=\cS\cup\cbrace{\ve}$
\EndIf
\EndIf
\EndIf
\EndWhile
\State $\vchat\gets\perp$ \Comment{Code word not found in $b$ queries; erasure}
\State \Return 
\end{algorithmic}
\end{algorithm}

\begin{table}[t]
\centering
\begin{tabular}{|c||c||c||c||c|}
\hline
$g$ & $\ve$ & $p$ & $\jstar$ & $\cS$\\
\hline
$1$ & $\paren{0,0,0}$ & $0.432$ & $0$ & $\cbrace{\paren{1,0,0}}$\\
\hline
$2$ & $\paren{1,0,0}$ & $0.288$ & $1$ & $\cbrace{\paren{1,1,0},\paren{0,1,0}}$\\
\hline
$3$ & $\paren{0,1,0}$ & $0.108$ & $2$ & $\cbrace{\paren{1,1,0},\paren{0,1,1},\paren{0,0,1}}$\\
\hline
$4$ & $\paren{1,1,0}$ & $0.072$ & $2$ & $\cbrace{\paren{0,1,1},\paren{0,0,1},\paren{1,1,1},\paren{1,0,1}}$\\
\hline
$5$ & $\paren{0,0,1}$ & $0.048$ & $3$ & $\cbrace{\paren{0,1,1},\paren{1,1,1},\paren{1,0,1}}$\\
\hline
$6$ & $\paren{1,0,1}$ & $0.032$ & $3$ & $\cbrace{\paren{0,1,1},\paren{1,1,1}}$\\
\hline
$7$ & $\paren{0,1,1}$ & $0.012$ & $3$ & $\cbrace{\paren{1,1,1}}$\\
\hline
$8$ & $\paren{1,1,1}$ & $0.008$ & $3$ & $\emptyset$\\
\hline
\end{tabular}
\caption{An example of the execution of Algorithm~\ref{alg:sgrandab}. Column $\ve$ gives the last queried noise sequence. Column $\cS$ is the state of the set after each query has been made, at the end of the while loop when all new putative error sequences have been added. The column marked $p$ reports $p\paren{\vy\mid\vym-\ve}$.}
\label{tbl:guess_example}
\end{table}

\begin{theorem}\label{thm:sgrand_correctness}
For an additive memoryless channel, Algorithm~\ref{alg:sgrandab} satisfies the following properties:
\begin{samepage}
\begin{enumerate}
    \item \textbf{Correctness:} Error vectors are queried in non-increasing order of likelihood, therefore a returned code word is a ML code word. \label{sgrand_prop_correctness}
    \item \textbf{Progress:} Each error vector is queried at most once.\label{sgrand_prop_progress}
    %\item An error vector cannot be added to $\cS$ more than once. \label{sgrand_prop_progress}
    %\item $\abs{\cS}$ increases by at most one after each code book check.\label{sgrand_prop_size}
\end{enumerate}
\end{samepage}
\end{theorem}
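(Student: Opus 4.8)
\emph{Proof sketch.} The plan is to read Algorithm~\ref{alg:sgrandab} as a best-first (highest-likelihood-first) search over a rooted binary tree whose $2^n$ nodes are the putative error vectors $\ve\in\gftwo^n$. For $\ve\neq\vzero$ put $\jstar\paren{\ve}=\max\cbrace{j:e_{i_j}\neq 0}$ and $\jstar\paren{\vzero}=0$; when $\jstar\paren{\ve}<n$ declare the children of $\ve$ to be (i) the vector $\ve$ with coordinate $i_{\jstar\paren{\ve}+1}$ set to $1$, and (ii) --- only when $\jstar\paren{\ve}>0$ --- the vector obtained from (i) by additionally resetting coordinate $i_{\jstar\paren{\ve}}$ to $0$. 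These are exactly the vectors appended to $\cS$ in the \textbf{Else} branch. A short case split on whether $e_{i_{\jstar\paren{\ve}-1}}$ is $0$ or $1$ shows that each $\ve\neq\vzero$ has a unique parent, obtained by inverting operation (i) or (ii), whose $\jstar$ equals $\jstar\paren{\ve}-1$; iterating the parent map therefore reaches $\vzero$ after $\jstar\paren{\ve}$ steps, so the node set is all of $\gftwo^n$ and each error vector appears exactly once.

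I would then establish a loop invariant: at the top of the \textbf{while} loop, $\cS$ equals the frontier of the set $Q$ of already-queried vectors, namely $\cbrace{\vw\notin Q:\mathrm{parent}\paren{\vw}\in Q}$, together with $\vzero$ when $\vzero\notin Q$, and $Q$ is closed under taking parents. The base case is $Q=\emptyset$, $\cS=\cbrace{\vzero}$. For the step, removing the queried $\ve$ from $\cS$ and inserting its children --- which lie neither in $Q$ nor in the old $\cS$, since their parent $\ve$ was not yet in $Q$ --- turns the frontier of $Q$ into the frontier of $Q\cup\cbrace{\ve}$, and closure under parents is preserved because $\ve$'s parent was in the old frontier, hence already in $Q$. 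In particular $Q$ grows by exactly one new, distinct vector per iteration, which is Property~\ref{sgrand_prop_progress}; and since $\gftwo^n$ is finite and $\cC\neq\emptyset$, SGRAND ($b=\infty$) must eventually query some $\vym-\ve\in\cC$ and halt.

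Next is the monotonicity that drives correctness: for an additive memoryless channel with uniform input, $p\paren{\vy\mid\vym-\ve}$ is proportional --- with constant independent of $\ve$ --- to $\prod_{l=1}^{n}p\paren{\vym_l-e_l\mid y_{\govern{l}}}$, each factor being maximized at $e_l=0$ by the definition of $\vym$. If $\vw$ is child (i) of $\ve$, then $\vym-\vw$ and $\vym-\ve$ differ only in coordinate $i_{\jstar\paren{\ve}+1}$, where $\vw$ takes the less likely value, so $p\paren{\vy\mid\vym-\vw}\le p\paren{\vy\mid\vym-\ve}$. If $\vw$ is child (ii), they differ in coordinates $i_{\jstar\paren{\ve}}$ and $i_{\jstar\paren{\ve}+1}$ (the flipped coordinate is moved off the first onto the second), and writing $a_j=p\paren{\vym_{i_j}\mid y_{\govern{i_j}}}\ge\tfrac12$ the ratio $p\paren{\vy\mid\vym-\vw}/p\paren{\vy\mid\vym-\ve}$ equals $a_{\jstar\paren{\ve}}\paren{1-a_{\jstar\paren{\ve}+1}}/\paren{\paren{1-a_{\jstar\paren{\ve}}}a_{\jstar\paren{\ve}+1}}$, which is $\le 1$ precisely when $a_{\jstar\paren{\ve}}\le a_{\jstar\paren{\ve}+1}$ --- the defining property of the OEI vector (the boundary value $a_j=1$ being ruled out almost surely by continuity of the noise). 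Thus likelihood is non-increasing from a node to either child, hence along every path away from the root. This child-(ii) case is the main obstacle: it is the only place the OEI ordering is used essentially, and the implication $a_{\jstar\paren{\ve}}\le a_{\jstar\paren{\ve}+1}\Rightarrow$ (ratio $\le 1$) is exactly what lets a purely local, frontier-based priority rule reproduce the global likelihood order.

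Finally I would combine the two. By the invariant, any not-yet-queried vector $\vw$ has an ancestor in $\cS$ (walk up its parent chain to the first queried vertex; the child of that vertex lying on the chain is in the frontier), and by the path-monotonicity that ancestor has likelihood at least $p\paren{\vy\mid\vym-\vw}$; hence the $\arg\max$ over $\cS$ chosen by the algorithm is a global maximizer of $p\paren{\vy\mid\vym-\cdot}$ over all un-queried vectors. Consequently the queried vectors are examined in non-increasing order of $p\paren{\vy\mid\vym-\ve}$, so the first $\ve$ with $\vym-\ve\in\cC$ --- which the algorithm returns --- attains $\max_{\vc\in\cC}p\paren{\vy\mid\vc}$ and is a ML code word, which is Property~\ref{sgrand_prop_correctness}.
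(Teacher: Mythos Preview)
Your proof is correct and follows essentially the same approach as the paper: both build the same rooted tree on $\gftwo^n$ (your children are exactly the inverses of the paper's parent map $\predsym$), both prove the key monotonicity that likelihood is non-increasing along tree edges --- with the OEI ordering used precisely for your child-(ii) ratio calculation, matching the paper's case $\jstar>1,\,e_{i_{\jstar-1}}=0$ in Lemma~\ref{lem:predecessor_properties} --- and both conclude that the frontier always contains a vector dominating every un-queried one. Your loop-invariant/best-first-search packaging is a cleaner direct argument than the paper's induction-plus-contradiction, but the mathematical content is identical.
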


\subsection{Proof of Theorem~\ref{thm:sgrand_correctness}}\label{sub:proof_thm}
\begin{proof}
For an OEI vector $\vi$ we define the \textit{parent} of $\ve\neq\vzero$, denoted by $\pred{\ve}$, as the following vector:
  \begin{align*}
    {\paren{\pred{\ve}}}_{i_j} = \left\{\begin{array}{lr}
        0, & \text{if } j=\jstar, \\
        1, & \text{if } j=\jstar-1\\
        e_{i_j}, & \text{otherwise}
        \end{array}\right.
  \end{align*}
where $\jstar$ is defined as in Algorithm~\ref{alg:sgrandab} and $\pred{\vzero}$ is not defined. We say that $\ve$ is a \textit{child} of $\pred{\ve}$. While a parent is unique, most, but not all, vectors have two children. The only vectors that do not have two children are:
$\vzero$, which has a unique child, the vector has $1$ in $e_{i_1}$ and zero elsewhere; and those where $\jstar=n$, which have no children. For instance, for an OEI vector $\vi=\paren{1,2,3}$,  $\paren{1,1,0}=\pred{\paren{1,1,1}}=\pred{\paren{1,0,1}}$. 
To establish the theorem, first we prove the following lemma.
\begin{lemma}
The following are true for all $\ve\neq\vzero$ and an arbitrary memoryless channel:
\begin{enumerate}\label{lem:predecessor_properties}
    \item $\pred{\ve}\neq\ve$
    \item $p\paren{\vy\mid\vym-\ve}\leq p\paren{\vy\mid\vym-\pred{\ve}}$\label{lemprop:pred_prob}
    \item $\predsym\circ\ldots\circ\pred{\ve}=\vzero$ after $\jstar$ compositions.\label{lemprop:pred_chain}
\end{enumerate}
\end{lemma}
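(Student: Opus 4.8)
The three assertions are of very different difficulty, so the plan is to dispatch them in increasing order of effort. Item~1 I would get straight from the defining formula: since $\ve\neq\vzero$ we have $\jstar=\max\cbrace{j:e_{i_j}\neq0}\geq 1$ and $e_{i_{\jstar}}=1$, while by definition $\paren{\pred{\ve}}_{i_{\jstar}}=0$, so the two vectors already disagree in coordinate $i_{\jstar}$. For item~3 I would first isolate the structural fact that the parent map lowers $\jstar$ by \emph{exactly one}: reading off the definition, in $\pred{\ve}$ every coordinate $i_j$ with $j>\jstar$ still equals $e_{i_j}=0$, coordinate $i_{\jstar}$ is now $0$, and coordinate $i_{\jstar-1}$ is $1$ whenever $\jstar\geq 2$; hence $\jstar\paren{\pred{\ve}}=\jstar-1$, and if $\jstar=1$ then $\pred{\ve}=\vzero$. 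Since a vector with $\jstar=0$ is $\vzero$, a one-line induction on $\jstar$ gives $\predsym\circ\cdots\circ\pred{\ve}=\vzero$ after exactly $\jstar$ compositions. (This is also what makes the parent relation a tree rooted at $\vzero$, which underlies the Progress half of the theorem.)

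Item~2 is the substantive step, and the only place memorylessness and the OEI ordering are needed. The plan is to rewrite the likelihood multiplicatively in terms of per-bit ``flip penalties'': by memorylessness (equivalently, via the per-bit a posteriori model of Section~\ref{sec:model} on which the OEI vector is built), one has $p\paren{\vy\mid\vym-\vv}=p\paren{\vy\mid\vym}\prod_{j:\,v_{i_j}=1}\beta_{i_j}$, where $\beta_{i_j}=p\paren{1-\vym_{i_j}\mid y_{\govern{i_j}}}/p\paren{\vym_{i_j}\mid y_{\govern{i_j}}}$. Because $\vym_{i_j}$ is the most likely demodulated value, $\beta_{i_j}\leq 1$, and because the OEI vector orders bits by increasing reliability, $\beta_{i_1}\geq\cdots\geq\beta_{i_n}$. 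Now the flip-sets of $\ve$ and $\pred{\ve}$ agree except that $i_{\jstar}$ is flipped in $\ve$ but not in $\pred{\ve}$, while $i_{\jstar-1}$ is always flipped in $\pred{\ve}$ but possibly not in $\ve$. Hence the ratio $p\paren{\vy\mid\vym-\ve}/p\paren{\vy\mid\vym-\pred{\ve}}$ is $\beta_{i_{\jstar}}$ (if $e_{i_{\jstar-1}}=1$, or if $\jstar=1$) or $\beta_{i_{\jstar}}/\beta_{i_{\jstar-1}}$ (if $e_{i_{\jstar-1}}=0$), and both are $\leq 1$ by the two displayed monotonicities.

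The hard part will be the factorization in item~2: one must check that passing from $\ve$ to $\pred{\ve}$ modifies at most the channel outputs $y_{\govern{i_{\jstar}}}$ and $y_{\govern{i_{\jstar-1}}}$, so that all other factors cancel, and that ``unflipping the less reliable bit $i_{\jstar}$ while flipping the more reliable bit $i_{\jstar-1}$'' is net non-decreasing for the likelihood — this is precisely $\beta_{i_{\jstar}}\leq\beta_{i_{\jstar-1}}$, coming from the OEI order. For $m=1$ this is transparent; for general $m$ it is exactly the per-bit reliability model that the OEI vector already encodes. With the lemma established, the theorem follows quickly: items~1 and~3 make $\cS$ the frontier of a traversal of the parent-tree, so each error vector enters $\cS$ (and is queried) at most once, giving Progress; and iterating item~2 along parent chains shows every not-yet-queried vector has likelihood at most that of the current maximiser of $\cS$, giving the non-increasing query order and hence Correctness.
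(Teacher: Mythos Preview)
Your proof is correct and follows essentially the same approach as the paper. The only cosmetic difference is in item~2: the paper works directly with per-bit likelihood factors $q_j(x)=p\paren{y_{\govern{i_j}}\mid\vym_{i_j}-x}$ and takes the ratio $q_{\jstar-1}(1)q_{\jstar}(0)/\paren{q_{\jstar-1}(0)q_{\jstar}(1)}$, whereas you reparametrize the same quantity as a product of ``flip penalties'' $\beta_{i_j}$ expressed via posterior ratios; under the uniform-input assumption these are identical, and your two-case split ($e_{i_{\jstar-1}}=1$ or $\jstar=1$ versus $e_{i_{\jstar-1}}=0$) matches the paper's exactly.
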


\begin{proof}
Let $\ve\neq\vzero$, $\vi$ be an OEI vector, and $\jstar$ (with respect to $\ve$) be defined as in Algorithm~\ref{alg:sgrandab}.

\noindent 1) For any such $\ve$, $\paren{\ve}_{i_{\jstar}}=1, \paren{\pred{\ve}}_{i_{\jstar}}=0$, so $\ve\neq\pred{\ve}$.

\noindent 2) Let $q_j\paren{x}=p\paren{y_{\govern{i_j}}\mid\vym_{i_j}-x}$, $\pvx=\prod_{j=1}^{n}q_j\paren{x_{i_j}}$. The existence of $j$ such that $j:q_j\paren{e_{i_j}}=0$ completes the proof. Otherwise assume $q\paren{\ve}>0$. If $\jstar=1$ or $e_{i_{\jstar-1}}=1$, then $\pve=q_{\jstar}\paren{1}\prod_{j\neq\jstar}q_{j}\paren{e_{i_j}} $, and $\ppredve=q_{\jstar}\paren{0}\prod_{j\neq\jstar}q_{j}\paren{e_{i_j}}$, which completes the proof, as $\forall j: q_{j}\paren{1}\leq q_{j}\paren{0}$. Otherwise assume that $\jstar>1,e_{\jstar-1}=0$. Then $\pve=q_{\jstar-1}\paren{0} q_{\jstar}\paren{1} \prod_{j\neq\jstar-1,\jstar}q_{j}\paren{e_{i_j}} $, and $\ppredve=q_{\jstar-1}\paren{1}\cdot q_{\jstar}\paren{0}\cdot\prod_{j\neq\jstar}q_{j}\paren{e_{i_j}}$. Note that $\ppredve/\pve=q_{\jstar-1}\paren{1} q_{\jstar}\paren{0}/(q_{\jstar-1}\paren{0} q_{\jstar}\paren{1})\geq 1$ which completes the proof as $\jstar$ comes from an OEI vector.

\noindent 3) Let $\ve\neq 0, \jstar=\max\cbrace{j:{e_i}_j\neq 0}$. By definition of $\pred{\ve}$, we get $\jstar-1 = \max\cbrace{j:{\paren{\pred{\ve}}_i}_j\neq 0}$. By iterating this argument $\jstar$ times, we get the zero vector.
\end{proof}

Note that after each query Algorithm~\ref{alg:sgrandab} adds to $\cS$ all the children of the queried noise sequence $\ve$, after removing $\ve$ from $\cS$. We return to the proof Theorem~\ref{thm:sgrand_correctness} using Lemma~\ref{lem:predecessor_properties}. Observe that in order to prove Property~\ref{sgrand_prop_progress}, it is sufficient to prove that an error vector is added to $\cS$ at most once.

We prove properties~\ref{sgrand_prop_correctness} and \ref{sgrand_prop_progress} by induction on the number of queries $g$, evaluated at the end of the while loop. The case of $g=1$ follows immediately as $\cS$ is initialized to contain only $\ve=\vzero$, which is the most likely error. Then the newly added vector is the unique child of $\vzero$, which is not $\vzero$, so Property~\ref{sgrand_prop_progress} is satisfied. We now assume that the properties are satisfied after $g'$ queries, and establish that they are satisfied after $g'+1$ queries. For Property~\ref{sgrand_prop_progress} suppose by contradiction that after $g'+1$ queries, a vector $\vv$, that was previously added to $\cS$, is added to $\cS$. A vector is added to $\cS$ only when its parent is queried, so $\vv\neq\vzero$, as $\vzero$ has no parent. We conclude that at the $g'+1$ query, the unique parent $\ve=\pred{\vv}$ is queried, which means that $\ve$ has been added more than once within $g'$ queries, which contradicts the induction assumption. 

For Property~\ref{sgrand_prop_correctness} suppose by contradiction that it is not satisfied, and the next most likely error vector $\vv\not\in\cS$ that has not been queried before is more likely than $\ve$, i.e. $p\paren{\vy\mid\vym-\ve}<p\paren{\vy\mid\vym-\vv}$, and therefore $\vv$ should have been queried at query number $g'+1$. We know that $\vv\neq\vzero$, as $\vzero$ is always the first queried error. $\pred{\vv}\in\cS$ cannot hold, as $\pred{\vv}$ is at least as likely as $\vv$ due to Lemma~\ref{lem:predecessor_properties}, which would contradict $p\paren{\vy\mid\vym-\ve}<p\paren{\vy\mid\vym-\vv}$. Furthermore, if $\pred{\vv}\in\cS$ ever held, then $\vv$ would be added to $\cS$, which contradicts the assumption that it was never added to $\cS$. Since $\pred{\vv}\not\in\cS$, and has never been, it means that $\pred{\vv}$ has not been previously queried. If $p\paren{\vy\mid\vym-\vv}<p\paren{\vy\mid\vym-\pred{\vv}}$, we contradict the assumption that $\vv$ should have been the next query. Otherwise, assume $p\paren{\vy\mid\vym-\vv}=p\paren{\vy\mid\vym-\pred{\vv}}$ and both $\vv,\pred{\vv}$ have not been previously queried. By repeating this argument $\jstar$ times, we get that $\vzero$ has not been queried (see Lemma~\ref{lem:predecessor_properties}), which contradicts the fact that it is always the first query of the algorithm.
\end{proof}

\subsection{Complexity considerations}\label{sub:complexity}
After each iteration of the algorithm, one vector is removed from $\cS$, and at most two are added to it. Therefore $\cS$ grows by at most one after each query, so after $g$ queries it contains at most $g+1$ elements. One efficient way of implementing $\cS$ is via a Max-Heap~\cite{cormen2009introduction}. At each iteration Algorithm~\ref{alg:sgrandab} performs the following: extracts the most likely error vector of $\cS$, performs matrix multiplication, and adds at most two elements to $\cS$. The complexity of the matrix multiplication, denoted $f\paren{n,k}$, differs across implementations, e.g.~\cite{burgisser2013algebraic,frigo1999cache}. Hence the complexity of the algorithm after $g$ queries, when $\cS$ is implemented with a Max-Heap, is $\bigo\paren{g\cdot f\paren{n,k}\cdot\log g}$. As a result, the worst-case complexity of SGRANDAB with an abandonment threshold of $b$ (as defined in Algorithm~\ref{alg:sgrandab}) is $\bigo\paren{b\cdot f\paren{n,k}\cdot\log b}$. This is a worst-case analysis, and in practice SGRANDAB often finds a code word much earlier before reaching the abandonment threshold, as can be seen in the simulated results. We also note that the algorithm can be readily parallelized, reducing its time complexity. One  mechanism to do so would be to check membership of the code book using different vectors in $\cS$ concurrently, while keeping track of each error vector's respective likelihood.

\subsection{SGRAND, no intra-symbol interleaving}
In almost all communication systems, and at many layers, bits are interleaved so that correlations in noise that corrupts adjacent bits does not result in correlation at the level of information bits. Interleaving is often performed on bits within the same modulated symbol, as well as across bits from different modulated symbols. While most decoders are designed under assumptions of substantial interleaving, we show how SGRAND can be altered to operate when bits within a modulated symbol are not assumed to be independently impacted by noise. These bits are treated as a symbol in $\gfq$, so that a demodulated symbol is a element of $\gfq,q=2^m$ and, with a mild abuse of notation, we say that a demodulated vector is $\vym\in\gfq^{n/m}$, the most likely demodulated version of $\vy$ over $\gfq^{n/m}$.

In the proof of Theorem~\ref{thm:sgrand_correctness} we argued that at each iteration Algorithm~\ref{alg:sgrandab} removes the queried error sequence $\ve$ from $\cS$, performs code book membership test, and adds all the children of $\ve$ to $\cS$. The proof relies on error sequences satisfying Lemma~\ref{lem:predecessor_properties}, and not on the specific definition of parents and children. As a result, any definition of parents and children that satisfies Lemma~\ref{lem:predecessor_properties} would result in an algorithm that satisfies Theorem~\ref{thm:sgrand_correctness}.

%Consider the same system model described in Section~\ref{sec:model}, with the exception that the information and code are over $\gfq$. A demodulator in this scenario is a function $\theta:\complex^{n/m}\to\gfq^n$ that returns $\vym=\underset{\vw\in\gfq^n}{\arg\max\;}p\paren{\vy\mid\vw}$.

For the $i$-th demodulated symbol, we define the \textit{ordered symbol indices} (OSI) vector $\vs^i\in\gfq^q$ as a vector that satisfies the following: 1) $s_1^i=\vym_i$, 2) $p\paren{y_{\govern{i}}\mid s_j^i}\geq p\paren{y_{\govern{i}}\mid s_l^i}\geq $ for $j<l$, 3) $\vs^i$ contains each element of $\gfq$ once. An OSI vector depends on $\vy$, the modulation and the statistics of the channel, but this dependency is made implicit for simplicity. We assume that a priori demodulated distribution is uniform over $\gfq$. An OSI vector of the $i$-th symbol is a vector that rank orders all possible demodulated symbols from most likely to least likely. For the $i$-th symbol we define $\prev{s_j^i}=s_{j-1}^i$, and $\prev{s_1^i}$ is not defined. For example, in the binary case for every bit we have $s_1^i=\vym_i,s_2^i=1-\vym_i$, $\prev{1-\vym_i}=\vym_i$, where the subtraction is in $\gftwo$.

Let $\ve\in\gfq^{n/m}$ be a non-zero error vector, $\vi$ be an OEI vector, $\jstar$ be defined as in Algorithm~\ref{alg:sgrandab}, and $\vs^{\jstar}$ be an OSI of the $\jstar$-th symbol. In this scenario we define the \textit{parent} of $\ve$, denoted by $\pred{\ve}$, as the following vector:
\[
    {\paren{\pred{\ve}}}_{i_j} = \left\{\begin{array}{lr}
        \prev{e_{i_j}}, & \text{if } j=\jstar, \\
        e_{i_j}, & otherwise
        \end{array}\right.
\]
while maintaining the same notation of $\ve$ being a child of $\pred{\ve}$. Note that by using this definition, each error has at most $n/m$ children. For example, when $n/m=4,q=4$, and assuming the OSI vector of each symbol is $\paren{0,1,\alpha,\beta}$, for an OEI vector $\vi=\paren{1,2,3,4}$, we have $\paren{1,0,0,0}=\pred{\paren{\alpha,0,0,0}}=\pred{\paren{1,1,0,0}}=\pred{\paren{1,0,1,0}}=\pred{\paren{1,0,0,1}}$. Lemma~\ref{lem:predecessor_properties} still holds, with the exception of Property~\ref{lemprop:pred_chain} requiring at most $\paren{q-1}\jstar$ compositions. The proof is similar to the proof given in~\ref{sub:proof_thm}, and is not given to avoid duplicity. Notice that if we try to define a parent in a similar fashion to the definition of the previous case, i.e. try to change the value of the $\paren{i_{\jstar-1}}$-th symbol, we may violate Property~\ref{lemprop:pred_prob} of Lemma~\ref{lem:predecessor_properties}.

SGRNADAB can be implemented using the new definition of a parent and children, where at each iteration the algorithm adds to $\cS$ all the children of the queried error sequence. Theorem~\ref{thm:sgrand_correctness} holds in this case, and the proof is similar to the original proof, since Lemma~\ref{lem:predecessor_properties} holds. For similar reasons as before, the worst-case complexity in this scenario is $\bigo\paren{b\cdot f\paren{n,k}\log\paren{bn/m}}$, as at each iteration at most $n/m$ elements added $\cS$, while one is removed from $\cS$. Thus, unlike most decoding algorithms, SGRAND can be adapted to manage the joint corruption of bits within a symbol by noise.

\section{Performance Evaluation}\label{sec:simulations}

Polar codes were the first provably capacity-achieving non-random code construction~\cite{arikan2008channel,arikan2009channel}. While they hold  promise for reliable communication, poor BLER results were reported based on their initial decoders~\cite{tal2011list,tal2015list,pfister2014brief}. Consequently, for all 5G NR control channel communications, a Cyclic Redundancy Check (CRC) is first appended and the whole result Polar coded, resulting in CA-Polar codes. 

Existing competitive decoding algorithms use this concatenated structure by first generating a list of possible candidates from the Polar code, and then selecting the best candidate that satisfies the CRC. When combined with soft information, the resulting CA-SCL decoders have shown considerable improvement in BLER performance~\cite{tal2011list,tal2015list, niu2012crc, balatsoukas2015llr}. As a state-of-the-art implementation of the CA-SCL decoding technique is available in Matlab's 5G toolbox version R2019a~\cite{matlab5g}, we compared SGRANDAB's BLER performance in that environment, relying mostly on Matlab's own code, see Fig.~\ref{fig:simulation}.

Note that considered as a single code, CA-Polar codes are themselves linear. They are first encoded with a CRC, interleaved if Downlink, and then encoded with a Polar code. All these operations are linear, hence the encoded message is the result of a linear encoding, $\vc=\vu\, G_{CRC}\, M_{interleave}\, G_{Polar}$ where $G_{CRC}$, $G_{Polar}$ are the generator matrices of the CRC code and the Polar code respectively, and $M_{interleave}$ is the interleaving matrix, which is the identity for Uplink communications. Thus from SGRANDAB's point of view, this is a single linear code and Algorithm~\ref{alg:sgrandab} can be used for its direct decoding of the concatenated code.

We compared SGRANDAB's performance with Matlab's CA-SCL decoder. While Matlab's default list size for CA-SCL is $L=8$~\cite{matlab5g}, also suggested in~\cite{3gppcapolar},\cite{xiang2019crc}, which is often considered the baseline for 5G performance evaluations~\cite{bioglio2018design}, at the cost of increased decoding times, we observed enhanced BLER performance with larger list sizes and so report results with $L=32$, as done for example in~\cite{tal2011list},\cite{niu2012crc},\cite{sarkis2015fast},\cite{wang2016parity}. The simulation was conducted on AWGN channel where Guassian noise with 0 mean and spectral density $\sigma^2=N_0/2$ is added independently to each modulated symbol. Here Signal to Noise Ratio (SNR) is defined as $\SNR=-10\log_{10}\paren{\sigma^2}\,\sbrace{dB}$, and $\ebno=\SNR-10\log_{10}\paren{k/n}-10\log_{10}\paren{m}\,\sbrace{dB}$, though we note that in Matlab's demo $\ebno=\SNR-10\log_{10}\paren{(k+\crclen)/n}-10\log_{10}\paren{m}$, where $\crclen$ is the number of CRC bits, corresponding to the Polar code alone\footnote{Matlab's technical support indicates this is to be consistent with 3GPP practice.}. 

Setting an abandonment threshold of $b=10^6$ queries for SGRANDAB, whereupon an erasure is reported, a BLER comparison for the [128,105] Uplink code, subject to Binary Phase Shift Keying (BPSK) modulation, is given in Fig.~\ref{fig:bler}. With the same abandonment threshold, an equivalent result for the [64,46] Uplink code, subject to Quadrature Phase Shift Keying (QPSK) modulation, is shown in Fig.~\ref{fig:simulation_bler}. For the longer code, SGRANDAB outperforms CA-SCL by $\approx 0.5\,\db$, and for the shorter code by $\approx 1\,\db$, indicating that further performance is available from these codes. In Fig.~\ref{fig:simulation_bler}, for SGRANDAB errors are not dominated by erasures, so the BLER curve is a good approximation of an unconstrained ML decoder's BLER curve. 

As a proxy for computational complexity, Fig.~\ref{fig:guesswork} and Fig.~\ref{fig:simulation_guesswork} report box plots for the number queries performed until a decoding is found. In all cases, this is significantly lower than the threshold $b$, which directly affects the complexity of the decoder, as discussed in~\ref{sub:complexity}. In particular, SGRANDAB's complexity gets better as the SNR improves. For comparison, a na\"ive ML decoder that computes the likelihood of each of the $2^k$ possible code words would require $\approx 10^{13}$, and $\approx 10^{31}$ likelihood calculations for the [64,46] and [128,105] codes, respectively. We also observe that in an AWGN channel, assuming bits are well-interleaved and each bit is independently impacted by a white Gaussian noise, the querying order does not depend on the SNR. The reason for that is that SNR affects the noise's variance, that would result in multiplying the log likelihood ratio of each bit by the same constant, which does not affect the querying order. In other words, assuming an AWGN channel corrupts transmitted code words, SGRANDAB does not require knowledge of the SNR in order to determine the querying order.

\begin{figure}
    \centering
    \includegraphics[width=1 \columnwidth]{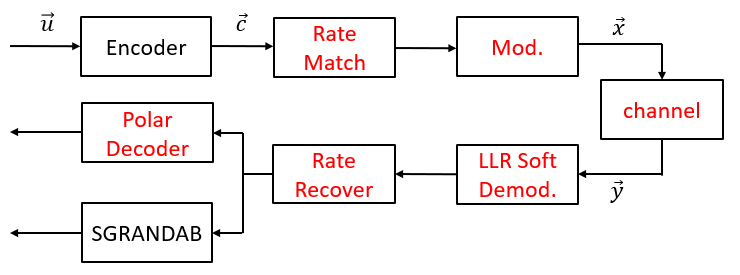}
    \caption{Simulation overview. Blocks containing red text are used as implemented in Matlab's 5G toolbox.}
    \label{fig:simulation}
\end{figure}

\begin{figure}
    \centering
    \includegraphics[width=1 \columnwidth]{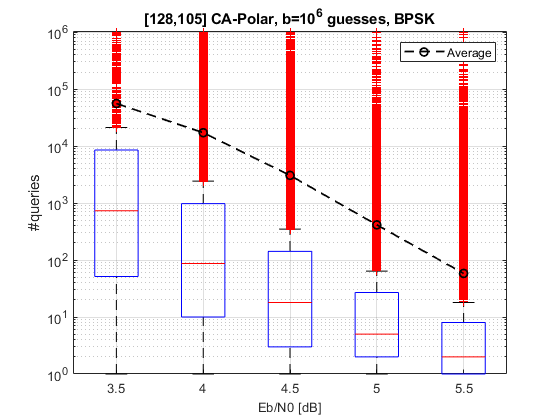}
    \caption{Number of queries.}
    \label{fig:guesswork}
\end{figure}

\begin{figure}
\centering
\subfloat[Block Error Rate.]{\includegraphics[width=1 \columnwidth]{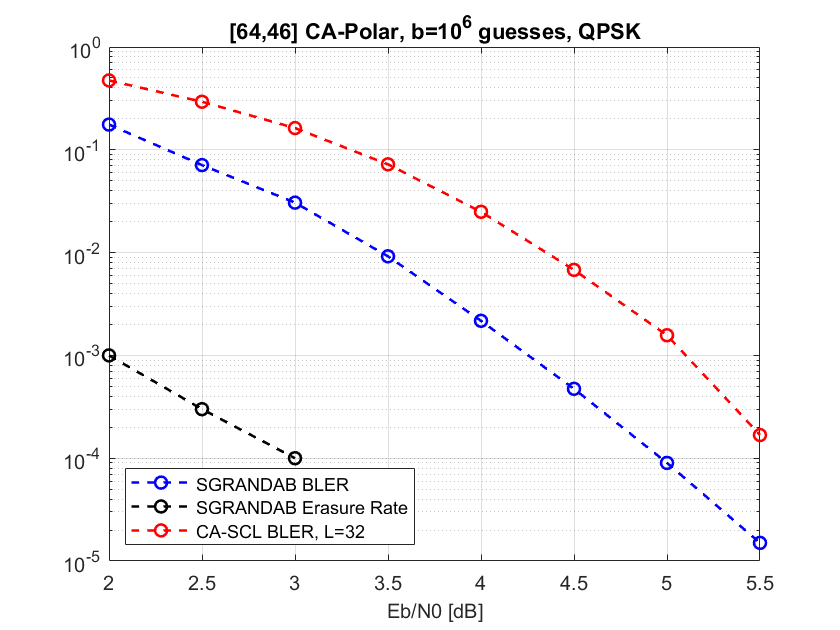}\label{fig:simulation_bler}}%
\linebreak
\subfloat[Number of queries]{\includegraphics[width=1 \columnwidth]{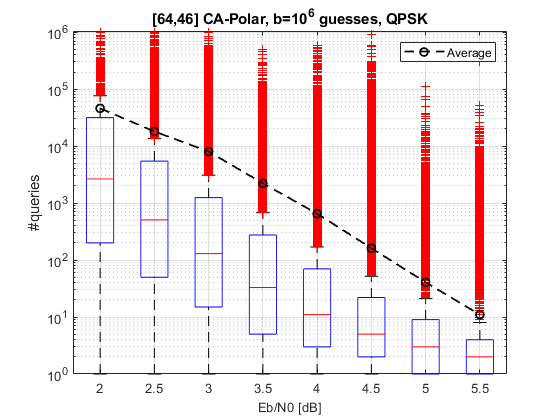}%
\label{fig:simulation_guesswork}}
\caption[]{Block Error Rate (BLER) comparison of Soft Guessing Random Additive Noise Decoder with Abandonment (SGRANDAB) and Matlab's implementation of~\cite{tal2011list,tal2015list} in Additive White Gaussian Noise subjet to QPSK modulation. $b$ is the abandonment threshold, as defined in Algorithm~\ref{alg:sgrandab}. For further details, see Section~\ref{sec:simulations}.}
\label{fig:comparison}
\end{figure}

\section{Conclusion and Discussion}\label{sec:conclusion}
In this paper we introduced SGRAND, a new soft detection ML decoder that is suitable for use with any short-length high-rate block code. For well interleaved channels, we formally proved that when SGRAND returns a code word, it is a maximum likelihood code word and hence no other decoder can be more accurate. We investigated its performance on 5G NR CA-Polar codes, which will be used for control channel communications in 5G NR, and compared its performance with a state-of-the-art soft CA-SCL decoder, establishing that an extra $0.5-1\,\db$ gain is possible through SGRANDAB's use.

While we have introduced SGRAND for memoryless channels, further generalisations of the approach are possible, but are not included here due to space constraints. One variant of SGRAND that can readily be developed is for use with a Markovian channel model, assuming the state of the channel is known.

%While we have introduced SGRAND for binary channel output and a memoryless channel, further generalisations of the approach are possible, but are not included due to space constraints. For example, a variant of SGRAND can readily be developed for use with a Markovian channel model, assuming the state of the channel is known. Furthermore, a generalization of the algorithm is possible for non-interleaved symbols jointly impacted by noise and interpreted in a higher field size, albeit at the cost of increased complexity.

\ifCLASSOPTIONcaptionsoff
  \newpage
\fi

\bibliographystyle{IEEEtran}
\bibliography{./bibtex/bib}

% Generated by IEEEtran.bst, version: 1.14 (2015/08/26)
\begin{thebibliography}{10}
\providecommand{\url}[1]{#1}
\csname url@samestyle\endcsname
\providecommand{\newblock}{\relax}
\providecommand{\bibinfo}[2]{#2}
\providecommand{\BIBentrySTDinterwordspacing}{\spaceskip=0pt\relax}
\providecommand{\BIBentryALTinterwordstretchfactor}{4}
\providecommand{\BIBentryALTinterwordspacing}{\spaceskip=\fontdimen2\font plus
\BIBentryALTinterwordstretchfactor\fontdimen3\font minus
  \fontdimen4\font\relax}
\providecommand{\BIBforeignlanguage}[2]{{%
\expandafter\ifx\csname l@#1\endcsname\relax
\typeout{** WARNING: IEEEtran.bst: No hyphenation pattern has been}%
\typeout{** loaded for the language `#1'. Using the pattern for}%
\typeout{** the default language instead.}%
\else
\language=\csname l@#1\endcsname
\fi
#2}}
\providecommand{\BIBdecl}{\relax}
\BIBdecl

\bibitem{shannon1948mathematical}
C.~E. Shannon, ``A mathematical theory of communication,'' \emph{Bell Labs
  Tech. J.}, vol.~27, no.~3, pp. 379--423, 1948.

\bibitem{berlekamp1978inherent}
E.~Berlekamp, R.~McEliece, and H.~Van~Tilborg, ``On the inherent intractability
  of certain coding problems (corresp.),'' \emph{IEEE Tran. Inf. Theory},
  vol.~24, no.~3, pp. 384--386, 1978.

\bibitem{roth2006introduction}
R.~Roth, \emph{{Introduction to coding theory}}.\hskip 1em plus 0.5em minus
  0.4em\relax CUP, 2006.

\bibitem{lin2001error}
S.~Lin and D.~J. Costello, \emph{{Error control coding}}.\hskip 1em plus 0.5em
  minus 0.4em\relax Pearson Education India, 2001.

\bibitem{duffy2018guessing}
K.~R. Duffy, J.~Li, and M.~M{\'e}dard, ``{Guessing noise, not code-words},'' in
  \emph{IEEE Int. Symp. Inf. Theory}, 2018, pp. 671--675.

\bibitem{duffy2019capacity}
------, ``{Capacity-achieving Guessing Random Additive Noise Decoding},''
  \emph{IEEE Tran. Inf. Theory}, vol.~65, no.~7, pp. 4023--4040, 2019.

\bibitem{Cha72}
D.~{Chase}, ``{A class of algorithms for decoding block codes with channel
  measurement information},'' \emph{IEEE Tran. Inf. Theory}, vol.~18, no.~1,
  pp. 170--182, 1972.

\bibitem{duffy2019guessing}
K.~R. {Duffy} and M.~M{\'e}dard, ``{Guessing random additive noise decoding
  with soft detection symbol reliability information},'' in \emph{IEEE Int.
  Symp. Inf. Theory}, 2019, pp. 480--484.

\bibitem{tal2011list}
I.~Tal and A.~Vardy, ``List decoding of polar codes,'' in \emph{IEEE Int. Symp.
  Inf. Theory}, 2011, pp. 1--5.

\bibitem{tal2015list}
------, ``List decoding of polar codes,'' \emph{IEEE Tran. Inf. Theory},
  vol.~61, no.~5, pp. 2213--2226, 2015.

\bibitem{ruskey2009coolest}
F.~Ruskey and A.~Williams, ``The coolest way to generate combinations,''
  \emph{Discrete Math.}, vol. 309, no.~17, pp. 5305--5320, 2009.

\bibitem{mazurkiewicz2017efficient}
T.~Mazurkiewicz, ``An efficient hardware implementation of a combinations
  generator,'' \emph{Technical Sciences/University of Warmia and Mazury in
  Olsztyn}, 2017.

\bibitem{cormen2009introduction}
T.~H. Cormen, C.~E. Leiserson, R.~L. Rivest, and C.~Stein, \emph{Introduction
  to algorithms}.\hskip 1em plus 0.5em minus 0.4em\relax MIT press, 2009.

\bibitem{burgisser2013algebraic}
P.~B{\"u}rgisser, M.~Clausen, and M.~A. Shokrollahi, \emph{{Algebraic
  complexity theory}}.\hskip 1em plus 0.5em minus 0.4em\relax Springer Science
  \& Business Media, 2013, vol. 315.

\bibitem{frigo1999cache}
M.~Frigo, C.~E. Leiserson, H.~Prokop, and S.~Ramachandran, ``{Cache-oblivious
  algorithms},'' in \emph{40th Annual Symposium on Foundations of Computer
  Science (Cat. No. 99CB37039)}.\hskip 1em plus 0.5em minus 0.4em\relax IEEE,
  1999, pp. 285--297.

\bibitem{arikan2008channel}
E.~Arikan, ``Channel polarization: A method for constructing capacity-achieving
  codes,'' in \emph{IEEE Int. Symp. Inf. Theory}, 2008, pp. 1173--1177.

\bibitem{arikan2009channel}
------, ``Channel polarization: A method for constructing capacity-achieving
  codes for symmetric binary-input memoryless channels,'' \emph{IEEE Tran. Inf.
  Theory}, vol.~55, no.~7, pp. 3051--3073, 2009.

\bibitem{pfister2014brief}
H.~D. Pfister, ``A brief introduction to {P}olar codes,'' \emph{Supplemental
  Material for Advanced Channel Coding}, 2014.

\bibitem{niu2012crc}
K.~Niu and K.~Chen, ``{CRC}-aided decoding of polar codes,'' \emph{IEEE Commun.
  Letters}, vol.~16, no.~10, pp. 1668--1671, 2012.

\bibitem{balatsoukas2015llr}
A.~Balatsoukas-Stimming, M.~B. Parizi, and A.~Burg, ``{LLR}-based successive
  cancellation list decoding of {P}olar codes,'' \emph{IEEE Trans. Signal
  Process.}, vol.~63, no.~19, pp. 5165--5179, 2015.

\bibitem{matlab5g}
``{{5G} New Radio Polar Coding.}''
  \url{https://www.mathworks.com/help/\\releases/R2019a/5g/gs/polar-coding.html},
  [Online; accessed 04-September-2019].

\bibitem{3gppcapolar}
``{3GPP Compliant Polar Encoding/Decoding Chain.}''
  \url{https://cdn2.hubspot.net/hubfs/4487277/Assets/Datasheets/\\AccelerComm\_3GPP\_Compliant\_Polar\_EnDecoding\_Chain\;(1).pdf},
  [Online; accessed 09-October-2019].

\bibitem{xiang2019crc}
L.~Xiang, Z.~B.~K. Egilmez, R.~G. Maunder, and L.~Hanzo, ``{CRC-Aided
  Logarithmic Stack Decoding of Polar Codes for Ultra Reliable Low Latency
  Communication in 3GPP New Radio},'' \emph{IEEE Access}, vol.~7, pp.
  28\,559--28\,573, 2019.

\bibitem{bioglio2018design}
V.~Bioglio, C.~Condo, and I.~Land, ``{Design of polar codes in 5G new radio},''
  \emph{preprint arXiv:1804.04389}, 2018.

\bibitem{sarkis2015fast}
G.~Sarkis, P.~Giard, A.~Vardy, C.~Thibeault, and W.~J. Gross, ``{Fast list
  decoders for polar codes},'' \emph{IEEE J. Sel. Areas Commun.}, vol.~34,
  no.~2, pp. 318--328, 2015.

\bibitem{wang2016parity}
T.~Wang, D.~Qu, and T.~Jiang, ``{Parity-check-concatenated polar codes},''
  \emph{IEEE Commun. Lett.}, vol.~20, no.~12, pp. 2342--2345, 2016.

\end{thebibliography}

\begin{comment}
% biography section
% 
% If you have an EPS/PDF photo (graphicx package needed) extra braces are
% needed around the contents of the optional argument to biography to prevent
% the LaTeX parser from getting confused when it sees the complicated
% \includegraphics command within an optional argument. (You could create
% your own custom macro containing the \includegraphics command to make things
% simpler here.)
%\begin{IEEEbiography}[{\includegraphics[width=1in,height=1.25in,clip,keepaspectratio]{mshell}}]{Michael Shell}
% or if you just want to reserve a space for a photo:
\begin{IEEEbiography}{Michael Shell}
Biography text here.
\end{IEEEbiography}

% if you will not have a photo at all:
\begin{IEEEbiographynophoto}{John Doe}
Biography text here.
\end{IEEEbiographynophoto}

% insert where needed to balance the two columns on the last page with
% biographies
%\newpage

\begin{IEEEbiographynophoto}{Jane Doe}
Biography text here.
\end{IEEEbiographynophoto}
\end{comment}

% You can push biographies down or up by placing
% a \vfill before or after them. The appropriate
% use of \vfill depends on what kind of text is
% on the last page and whether or not the columns
% are being equalized.

%\vfill

% Can be used to pull up biographies so that the bottom of the last one
% is flush with the other column.
%\enlargethispage{-5in}

% that's all folks
\end{document}

\begin{comment}